\documentclass[twocolumn,amsmath,amssymb, nofootinbib]{revtex4-1}\usepackage{float}
\usepackage{amsmath}
\usepackage{amssymb}
\usepackage{amsthm}
\usepackage{float}
\usepackage{hyperref}
\usepackage{color}
\usepackage{simplewick}

\newtheorem{Theorem}{Theorem}
\theoremstyle{Lemma}
\newtheorem{Lemma}[Theorem]{Lemma}

\DeclareSymbolFont{extraup}{U}{zavm}{m}{n}
\DeclareMathSymbol{\varheart}{\mathalpha}{extraup}{86}
\DeclareMathSymbol{\vardiamond}{\mathalpha}{extraup}{87}

\begin{document}
   \title{Problems with Propagation and Time Evolution in $f(T)$ Gravity}

  \author{Yen Chin Ong}
   \email{ongyenchin@member.ams.org}
    \affiliation{Graduate Institute of Astrophysics, National Taiwan University, Taipei 10617, Taiwan.}
 		\affiliation{Leung Center for Cosmology and Particle Astrophysics, National Taiwan University, Taipei 10617, Taiwan.}

\author{Keisuke Izumi}
   \email{izumi@phys.ntu.edu.tw}
   	\affiliation{Leung Center for Cosmology and Particle Astrophysics, National Taiwan University, Taipei 10617, Taiwan.}
   
 \author{James M. Nester}
   \email{nester@phy.ncu.edu.tw}
   \affiliation{Department of Physics, National Central University, Chungli 320, Taiwan.}
 \affiliation{Graduate Institute of Astronomy, National Central University, Chungli 320, Taiwan.}
 \affiliation{Center for Mathematics and Theoretical Physics, National Central University, Chungli 320, Taiwan.}
 \affiliation{Institute of Physics, Academia Sinica, Taipei 11529, Taiwan.}

 \author{Pisin Chen}
   \email{chen@slac.stanford.edu}
     \affiliation{Graduate Institute of Astrophysics, National Taiwan University, Taipei 10617, Taiwan.}
 		\affiliation{Leung Center for Cosmology and Particle Astrophysics, National Taiwan University, Taipei 10617, Taiwan.}
    \affiliation{Department of Physics, National Taiwan University, Taipei 10617, Taiwan.}
    \affiliation{Kavli Institute for Particle Astrophysics and Cosmology, SLAC National Accelerator Laboratory, Stanford University, Stanford, CA 94305, U.S.A.}

\begin{abstract}Teleparallel theories of gravity have a long history. They include a special case referred to as the \emph{Teleparallel Equivalent of General Relativity} (TEGR, aka GR$_{\|}$).  Recently this theory has been generalized to $f(T)$ gravity. Tight constraints from observations suggest that $f(T)$ gravity is not as robust as initially hoped.  This might hint at hitherto undiscovered problems at the theoretical level. In this work, we point out that a generic $f(T)$ theory can be expected to have certain problems including superluminal propagating modes, the presence of which can be revealed by using the characteristic equations that govern the dynamics in $f(T)$ gravity and/or the Hamiltonian structure of the theory via Dirac constraint analysis. We use several examples from simpler gauge field theories to explain how such superluminal modes could arise.
We also point out problems with the Cauchy development of a constant time hypersurface in FLRW spacetime in $f(T)$ gravity. The time evolution from a FLRW (and as a special case, Minkowski spacetime) initial condition is not unique. 
\end{abstract}

\maketitle
\section{Introduction}

Einstein proposed the idea of teleparallelism, or \textit{Fernparallelismus}  (aka \textit{absolute parallelism}, \textit{distant parallelism}), in order to unify gravity and electromagnetism~\cite{Einstein}. Unlike general relativity, spacetime in teleparallelism is equipped with a connection with zero curvature, but non-vanishing torsion. Since the spacetime is \emph{flat}, the parallel transport of a vector is independent of path, and hence the name \textit{tele}parallel, meaning ``parallel at a distance''. Einstein's quest for unification via teleparallel theory can be found in the interesting account
of~\cite{Sauer}. Despite Einstein's failure to formulate a unified theory, teleparallelism was later revived and studied as a pure theory of gravity. The main motivations were (1) that the teleparallel formulation was considered to have advantages with regard to the identification of the energy-momentum of gravitating systems, and (2) teleparallel theory can be regarded as a gauge theory of local translations (see,
e.g.,~\cite{Moller:1961jj,Cho:1975dh,Nester1989,Hehl:1994ue,Itin:1999wi,deAndrade:2000kr,Blagojevic:2000pi,Blagojevic:2002du,Hehl:2012pi}).\footnote{For recent study of teleparallelism as a \emph{higher} gauge theory, see \cite{Baez}.} In fact Einstein's general theory of relativity  (GR) can be rewritten in teleparallel language (see, e.g.,~\cite{Hayashi, Pereira, Kleinert, Sonester}).
The result is a theory equivalent to GR known  variously as the \textit{Teleparallel Equivalent of General Relativity}  (TEGR) or GR$_{\|}$. 
Recently TEGR has been generalized to $f(T)$ theory, a theory of modified gravity formed in the same spirit as generalizing general relativity to $f(R)$ gravity~\cite{0,1}. 

It seems however that observational data puts rather tight constraints on $f(T)$ gravity, effectively reducing the theory (at least for some choices of $f(T)$) back to TEGR (However, see also \cite{CRC} and \cite{ABCL}). This includes constraints from considering a varying fine structure constant~\cite{haowei}, the formation of large-scale structures~\cite{baojiu} and a varying gravitational constant~\cite{haowei2}, as well as solar system constraints ~\cite{lorenzo}. This could hint at problems with $f(T)$ gravity at the theoretical level.

In this paper we will briefly review TEGR and more general theories of teleparallel gravity, including $f(T)$ gravity, in section~\ref{review}. 
In section~\ref{degree}, we discuss  
the degrees of freedom for $f(T)$ theory. We know that there are extra degrees of freedom in the theory beyond those of GR; that the number of extra degrees of freedom is generically 3 was clearly revealed by the Dirac constraint analysis carried out by Li et al.~\cite{Li}.
Despite this, in a careful linear perturbative analysis up to second order on flat Friendmann-Lema\^itre-Robertson-Walker (FLRW) background, no extra mode appears~\cite{IzumiOng}. In the case of Minkowski background, this had already been pointed out in~\cite{Li}.
Disappearance of extra degrees of freedom is known to happen in some gravity theories (e.g., in massive gravity, 
see~\cite{Gumrukcuoglu:2011ew});
this can be caused by the accidental disappearance of kinetic terms at the linear level. If so,  as we shall argue, it is likely to be a signal of superluminal propagation.
We will also discuss the related danger of the \emph{nonlinear constraint effect}, which plagued some other theories of gravity in the past.
Our conclusion is that $f(T)$ theory is very \emph{likely} to give rise to unwanted problems  due to the effects of nonlinear constraints, in addition to propagation outside of
the null cone. As a consequence, the number and type of degrees of freedom could vary as a physical system evolves. This is further explored via the method of characteristics in section~\ref{method}.

The method of characteristics is properly introduced in section~\ref{method}, where we also provide an illuminating example in an application to the nonlinear
Proca field. The same example is given a Hamiltonian analysis that clearly shows the close connection of superluminal propagating modes and nonlinear constraints. Although the characteristic equations allow one to obtain superluminal propagating degrees of freedom, one must be careful to distinguish between \emph{physical} degrees of freedom and \emph{gauge} degrees of freedom, as the example of the Maxwell field illustrates.

In section~\ref{f(T)}, we investigate the dynamical equations in $f(T)$ theory and derive a system of partial differential equations that describe the characteristics of the theory. This calculation reveals nonlinear features that are very prone to allow superluminal propagation outside of the metric null cone.\footnote{We should remark here that \emph{superluminal} propagation may or may not be \emph{achronal}, so it is not certain that such propagation really does lead to a violation of causality. We make further comments on this in section \ref{method}.}
Furthermore, we also show that the constant-time hypersurface in FLRW spacetime, and as a special case Minkowski spacetime, can be a Cauchy \emph{horizon}, 
 which means that even the infinitesimal future time evolution from it is not unique.
We conclude with some discussion in section~\ref{discuss}.

\section{Teleparallel Theories of Gravitation}\label{review}

Let $M$ be a semi-Riemannian manifold of signature $(-,+,+,+)$ with metric $g$.  The teleparallel theories we wish to 
discuss\footnote{There are more general teleparallel theories which also have \emph{nonmetricity}~\cite{Hehl:1994ue}; in fact there are
\emph{symmetric} teleparallel theories---with vanishing torsion but non-vanishing nonmetricity---including one that is
equivalent to GR~\cite{Nester:1998mp,Adak:2005cd}.}
use the vierbein field or tetrad $e_a(~x^\mu)$ as the fundamental dynamical variables. We use Greek letters for spacetime coordinate
indices and Latin letters to label the tetrad vectors. 

Suppose that the manifold is \emph{parallelizable}, i.e., there exist $n$ vector fields $\left\{v_1,...,v_n\right\}$ such that at \emph{any} point $p \in M$ the tangent vectors $v_i|_p$'s provide a basis of the tangent space at $p$. Note that this \emph{particular} set of vector fields must be able to span the tangent spaces at all points of the manifold, i.e., it is a set of everywhere linearly independent \emph{globally non-vanishing vector fields}. Then we can view the mapping between the bases of the tangent space in a coordinate frame $\left\{\partial_\mu\right\}$ to that of a non-coordinate frame $\left\{e_a\right\}$ as an isomorphism $TM \to M \times \Bbb{R}^4$. Orthonormality of the tetrad is thus imposed by the Minkowski metric on the tangent space $T_p M \cong \Bbb{R}^{3,1} = (\Bbb{R}^4, \eta)$, related to the metric on the manifold $M$ by $g_{\mu\nu}=\eta_{_{ab}}e^a_{~\mu}e^b_{~\nu}$.

One then equips the manifold $M$ with the metric-compatible Weitzenb\"ock connection~\cite{Weitzenbock, Weitzenbock2} $\overset{w}\nabla$ defined by
\begin{equation}
\overset{w}\nabla_X Y  := (XY^a)e_a, 
\end{equation}
where $Y=Y^a e_a$.
Equivalently, the connection coefficients are
\begin{equation}
\overset{w}\Gamma~^\lambda_{~\mu \nu} = e_a^{~\lambda}\partial_\nu e^a_{~\mu} = - e^a_{~\mu} \partial_\nu e_a^{~\lambda}.
\end{equation}

In local coordinates, the components of the torsion tensor are
\begin{equation}
\overset{w}T{}^\lambda{}_{\mu\nu} := \overset{w}\Gamma{}^\lambda_{~\nu\mu} - \overset{w}\Gamma{}^\lambda{}_{\mu\nu}=e_a^{~\lambda} (\partial_\mu e^a_{~\nu} - \partial_\nu{}e^a_{~\mu})\neq 0.
\end{equation}
The Weitzenb\"ock connection can be shown to be curvature-free.\footnote{There is an alternative formulation for teleparallel theories that has some advantages; it uses the teleparallel connection as an \emph{a priori} independent dynamical variable that is constrained to have vanishing curvature via a Lagrange multiplier~\cite{Kopczynski,Hehl:1994ue,Blagojevic:2000pi,Blagojevic:2000qs,Blagojevic:2002du,Obukhov:2002tm}.  Here, to more simply communicate our ideas to the audience we most want to reach, we will use the formulation familiar in the $f(T)$ works.}

The Weitzenb\"ock connection differs from the Levi-Civita connection by the \textit{contortion tensor}\footnote{We remark that the proper term is \emph{contortion} instead of the often used ``contorsion''. See e.g., \cite{Hehl}.}
\begin{equation}\label{contortion}
\overset{w}K{}^{\mu\nu}{}_\rho=-\frac{1}{2}\left(\overset{w}T{}^{\mu\nu}{}_\rho - \overset{w}T{}^{\nu\mu}{}_\rho - \overset{w}T_\rho{}^{\mu\nu}\right).
\end{equation}

The Einstein-Hilbert action of general relativity is (in units where $c=1$)
\begin{equation}
S = \frac{1}{2\kappa} \int{d^4x ~\sqrt{-g} R}, \qquad \kappa = 8\pi G.
\end{equation}
Up to a total divergence (which does not affect the field equations) this action is equivalent to
\begin{equation}
S=-\frac{1}{2\kappa} \int{d^4x ~|e| \overset{w}T},\label{TEGRaction}
\end{equation}
where 
$e=\det{(e^a_{~\mu})}$,
\begin{equation}
\overset{w}S_\rho{}^{\mu\nu}:= \frac{1}{2} \left(\overset{w}K{}^{\mu\nu}{}_\rho + \delta^\mu_\rho \overset{w}T{}^{\alpha\nu}{}_{\alpha} - \delta^\nu_\rho \overset{w}T{}^{\alpha \mu}{}_{\alpha} \right),
\end{equation}
and 
$\overset{w}T$ is a certain quadratic-in-torsion scalar:
\begin{equation}\label{T}
\overset{w}T := \overset{w}S_\rho{}^{\mu\nu}\overset{w}T{}^\rho{}_{\mu\nu},
\end{equation}
the latter is the so-called \emph{torsion scalar}.

In this teleparallel language, (\ref{TEGRaction}) is called the \textit{TEGR action}. $f(T)$ gravity simply promotes the ``torsion scalar'' in the TEGR Lagrangian to a function, 
i.e. the action of $f(T)$ gravity is\footnote{For simplicity here we follow the somewhat non-standard convention used in \cite{IzumiOng}, in which the factor $-(2\kappa)^{-1}$ is absorbed into $f$, i.e., TEGR corresponds to $f(T)=-(2\kappa)^{-1}T$.} 
\begin{eqnarray}
S=\int d^4x ~ |e| f(T).
\end{eqnarray}

One can then obtain the field equation for $f(T)$ theory of modified gravity by varying with respect to the frame $e^a{}_{\nu}$, this leads to
 \begin{equation}\label{vfe}
X_a^{~\nu}:=
 \partial_\mu (ef_TS_a^{~\mu\nu}) - ee_a^{~\lambda} T^\rho_{~\mu\lambda} S_\rho^{~\nu\mu}f_T 
- \frac{1}{4}ee_a{}^\nu f = 0.
\end{equation}

We remark that the (quadratic-in-)torsion scalar defined above is a very special one. We note that mathematically we can define a more general torsion ``scalar'' $\tilde{T}$ by relaxing the coefficients:\footnote{In fact one could also include two more quadratic-in-torsion combinations which have odd parity; they can be found in Baekler and Hehl~\cite{Baekler:2011jt}.}
\begin{equation}\label{abc}
\tilde{T} = a\overset{w}T{}^\rho{}_{\eta\mu}\overset{w}T_\rho{}^{\eta\mu} + b\overset{w}T{}^\rho{}_{\mu\eta}\overset{w}T{}^{\eta\mu}{}_\rho + c\overset{w}T_{\rho\mu}{}^\rho \overset{w}T{}^{\nu\mu}{}_\nu.
\end{equation}
In particular one can construct the one-parameter combination  
\begin{equation}\label{OPTP}
\overset{~}T[\lambda] =  \frac{1+2\lambda}{4}\overset{w}T{}^\rho{}_{\eta\mu}\overset{w}T_\rho{}^{\eta\mu} + \frac{1-2\lambda}{2}\overset{w}T{}^\rho{}_{\mu\eta}\overset{w}T{}^{\eta\mu}{}_\rho - \overset{w}T_{\rho\mu}{}^\rho \overset{w}T{}^{\nu\mu}{}_\nu,
\end{equation}
where $\lambda \geq 0$.
Years ago a teleparallel theory using this as the Lagrangian was proposed as an alternative to GR.
This theory passed all observational (as of, perhaps up to, 1990)~\cite{SS, NH} and some theoretical tests~\cite{KN}.
But then certain theoretical problems were found, which we will discuss in the next section.

 We summarize how teleparallel theories work: We start with choosing a good local orthonormal frame $\left\{e_a\right\}$ which is \emph{declared} to be covariantly constant under parallel transport by imposing the Weitzenb\"ock connection. This defines a global frame field that sometimes goes by the name \emph{orthoteleparallel frame} or OT frame. Recall that there are infinitely many possible choices of frames that span the same tangent space at $p \in M$, related to each other by elements of the proper orthochronous Lorentz group. The torsion tensor and consequently the torsion scalar, can thus be expressed in terms of any one of these infinitely many bases. Once we get a parallelization, it is defined up to \emph{global} Lorentz transformation.
This defines a Weitzenb\"ock geometry. The set of all possible parallelizations of $M$ is then partitioned into equivalence classes consisting of parallelizations that relate to each other in the same class by a global Lorentz transformation (thus, one \emph{cannot} pass from one parallelizations to another by a local Lorentz transformation). Note that at this stage, nonzero torsion \emph{does not} mean that we have some ``distinguished frames'' or preferred frames; any equivalence class of frames are on the same footing, in the sense that they parallelized the spacetime equally well. However, once we introduce the physics via the action, we get field equations. \emph{Now} we can only admit certain classes of parallelizations, just as in general relativity where we can only admit metrics that satisfy the Einstein Field Equations.

We remark that due to the lack of local Lorentz invariance (See also the discussions in~\cite{Barrow1} and~\cite{Barrow2}), unlike in general relativity where we can change coordinate systems and use any frame fields, this is not so in generic teleparallel theories. As a consequence, this implies that we cannot directly extract the tetrad from the metric in the straightforward way. TEGR is special since its action does not determine the admissible OT frame but only the metric. From now onwards, we will suppress all the explicit overscript $w$'s on the torsion scalar and connection coefficients etc.

\section{Extra Degrees of Freedom and the Danger of Nonlinear Constraints}
\label{degree}

We are reminded of the problem faced by the one-parameter teleparallel theory in which the teleparallel Lagrangian has the form ~(\ref{OPTP}). This theory has 4 extra degrees of freedom~\cite{Nester3}. Kopczy\'nski~\cite{Kopczynski} first pointed out that this theory has predictability problems; he argued that one is
unable to determine uniquely the evolution of the teleparallel geometry. However, further analysis~\cite{Nester2} showed that the problem is not generic, instead it only occurs for a special class of solutions. Cheng et al.\ verified this behavior using the Hamiltonian approach and identified the troubles 
as a certain effect of nonlinear constraints, there referred to as \emph{constraint bifurcation}~\cite{Nester3}, in which the chain of constraints could ``bifurcate'' depending on the values of the fields. That is to say, the number or type of constraints would depend on the values of the phase space variables. One could imagine that the evolution of some system as depicted by a curve in phase space could pass through regions with a different number or type of constraints. This would be rather strange (since this means that the number of gauges and \emph{physical} degrees of freedom also change), and one will find it hard, if at all possible, to predict the evolution of the system. To put it in another way, the matrix with the Poisson brackets of the constraints as entries may not have constant rank, due to the nonlinearity of the constraints. The number or type of constraints may change as one approaches a point in the phase space where the rank changes. The more general Poincar\'e gauge theory also has similar problems~\cite{Nester8,Nester9,Nester10}. We suspect that this might also be the case for $f(T)$ theory (and probably for many teleparallel theories). It is of course conceivable that with a specific choice of $f(T)$, we could prevent the solutions from ever approaching such problematic regions in phase space, e.g., if they would act as a dynamical \emph{repeller}.

The number of degrees of freedom of a generic $f(T)$ theory of gravity (in $(3+1)$-dimensions) has been found to be five by Miao Li et al.~\cite{Li} using a Hamiltonian approach.\footnote{His analysis was based on the Hamiltonian formulation of Maluf~\cite{Maluf:2000ag,Maluf:2001rg}.  Additional insight could probably be found using the other more general Hamiltonian approaches to gravity theories which specialize to teleparallel theory~\cite{Blagojevic:2000pi,Blagojevic:2000qs,Nester:1991yd,Chen:1998aw}.}  That is, there are three extra degrees of freedom, which the authors suggest could correspond to one massive vector field or one massless vector field with one scalar field.
To understand the properties of the degrees of freedom in a theory, one should first try to
analyze them by a perturbative approach. However, it could happen that not all degrees
of freedom show up in the perturbation level.

For the case of $f(T)$ gravity, as remarked by Li et al.  at the end of their paper~\cite{Li}, for some special backgrounds 
(including Minkowski space with $e^a{}_{\mu}=\delta^a{}_\mu$), 
some of the extra degrees of freedom do not appear in the \emph{linear} perturbation. It is also found that~\cite{IzumiOng}, on the flat FLRW background with a scalar field, linear perturbation up to second order does not reveal \emph{any} extra degree of freedom. In other words, $f(T)$ theory is \emph{highly nonlinear}:
the number and type of constraints in the linear theory is \emph{different} from that of the full, nonlinear theory. This is also one of the problematic features discussed in the context of Poincar\'e Gauge Theory~\cite{Nester8, Nester9, Nester10}. The worry of such behavior is that linearized modes which are ``good'' may cease to be so in the full nonlinear theory, perhaps accompanied by anomalous characteristics, as we will further comment on in the later sections.

The revealing Hamiltonian discussion of Li et al.\ for $f(T)$ gravity in (3+1) dimensional spacetime is rather complicated and some of the results are only inferred implicitly.  However they also worked out \emph{explicitly} the analogous but somewhat simpler Hamiltonian formulation for $f(T)$ gravity in $(2+1)$-dimensions; for this case the problem we wish to address is clearly evident. In this case there are six first class constraints\footnote{We do not explain notations of Li et al.~\cite{Li} here since we will not need them in detail.} $(H, H_i, \Pi^{a0})$, where $a=0,1,2$ and $i=1,2$, as well as four
second class constraints $(\Gamma^1, \Gamma^2, \Gamma^{12}, \pi)$, which leads to two degrees of freedom. 

They define the quantities
\begin{flalign}
&y_i = \left\{H_0, \Gamma^i\right\}, ~y_3 = \left\{H_0, \Gamma^{12}\right\}, \notag \\
&x_0 = \left\{H_0, \pi \right\}, ~x_i = \left\{\Gamma^i, \pi \right\}, ~x_3 = \left\{\Gamma^i, \pi \right\},
\end{flalign}
and
\begin{eqnarray}
A_i &=& \left\{\Gamma^i, \Gamma^{12}\right\}  \nonumber \\
&\approx& 2e\left[g^{0i}\left(g^{01}g^{2m}-g^{02}g^{1m}\right) +g^{1i}(g^{0m}g^{02}-g^{2m}g^{00}) \right. \nonumber \\
&-& \left. g^{i2}(g^{0m}g^{01}-g^{1m}g^{00})\right)]\partial_m\phi.
\end{eqnarray}
The self-consistency equation in matrix form is shown to be $M_{3D}\Lambda_{3D}=0$, where $\Lambda_{3D}=(1,\lambda_1,\lambda_2,\lambda_3,\lambda)^T$ and
\begin{equation}
M_{3D}=
\begin{pmatrix}
0 &  y_1  & y_2 & y_3 & x_0\\
-y_1  &  0 & 0 & A_1 & x_1\\
-y_2 & 0 & 0 & A_2 & x_2\\
-y_3  & -A_1 & -A_2 &  0 & x_3\\
-x_0 & -x_1 & -x_2 & -x_3 & 0\\
\end{pmatrix}
\end{equation}
satisfies $\det{M_{3D}}=0$. This matrix generically is of rank 4. However, as previously mentioned, some of the Poisson brackets could vanish in some cases, which would result in rank changes of the matrix as the system evolves.  Miao Li et al.\ found the explicit formulas for the Lagrange multipliers:
\begin{equation}
\lambda_1 = \frac{A_2x_0 + x_3y_2 - x_2y_3}{A_1x_2 - A_2 x_1},
\end{equation}
\begin{equation}
\lambda_2 = \frac{-A_1x_0 - x_3y_1 + x_1y_3}{A_1x_2 - A_2 x_1},
\end{equation}
\begin{equation}
\lambda_3 = \frac{y_1x_2 - y_2x_1}{A_1x_2 - A_2 x_1},
\end{equation}
\begin{equation}
\lambda = \frac{A_1y_2 -A_2y_1}{A_1x_2 - A_2 x_1}.
\end{equation}
While they were interested in the most general case and thus only considered the case with $A_1x_2 - A_2 x_1\neq 0$, this type of special case is \textit{precisely} what we are interested in in this work. Indeed one sees that in principle $A_1x_2 - A_2 x_1$ can vanish, yet the numerators of these expressions are not generally constrained to vanish at the same time, which allows one or more of the Lagrange multipliers to become unbounded. As we will explain below, this is signaling a superluminal propagation mode.

\section{The Method of Characteristics} 
\label{method}

A viable theory of gravity should satisfy certain theoretical criteria, including the lack of tachyonic modes, or modes that carry negative kinetic energy (i.e., ghosts).\footnote{However, one may still argue that theories with such seemingly pathological features may still be acceptable. We will make further comments in section~\ref{discuss}.} The theory should also support a well-posed initial value problem, that is, satisfy the Cauchy-Kowalevski theorem~\cite{Cauchy, Kowalevski, Nakhushev}. Moreover any propagation mode in the theory should also be described by hyperbolic quasi-linear partial differential equations with well-behaved characteristics, i.e., the characteristic
surfaces should be non-spacelike. One should also consider a theory with a ``good'' Minkowski limit as preferable to a theory that does not have such behavior.

The method for studying characteristics is well-known; see for example the detailed classical work of Courant and Hilbert~\cite{Courant}, or Chapter 8 of the more recent text of~\cite{Strichartz}. A discussion of the method of characteristics in the case of general relativity can be found in Lecture 14 of~\cite{Buchdahl}. For more rigorous treatment see ~\cite{Friedlander}, as well as in \cite{Christodoulou}. 
See also~\cite{Zwanziger} for some useful discussions in the context of external fields in gauge theories. Here we only give a brief summary. Recall that if $P$ is a linear differential operator of order $k$, then we can consider $P$ as a polynomial in the derivative $D$. In multi-index notation we can write this as
\begin{equation}
P = \sum_{|\alpha|\leq k} a_{\alpha} (x) D^\alpha.
\end{equation}
We may ask in what directions it is really of order $k$. For an ODE $a_k(x)(d/dx)^k + ... + a_0(x)=0$ for example, it is obvious that $a_k(x) \neq 0$ is the condition required for the equation to be of order $k$ everywhere. For PDEs with multiple variables, we need to be careful. Consider for example the Laplacian $\partial^2/\partial x^2 + \partial^2/\partial y^2$ in $\Bbb{R}^2$. This is clearly of order 2 in both the $x$- and $y$-direction, however for an operator which is mixed, e.g., $\partial^2/\partial x \partial y$, it is \emph{not} of order 2 in either the $x$- or $y$-direction. Nevertheless, it \emph{is} a second-order operator, which can be revealed by introducing new variables, $t=x+y, s=x-y$, which renders $\partial^2/\partial x\partial y=(1/4)(\partial^2/\partial t^2 - \partial^2/\partial s^2)$. That is, this operator is of order 2 in the $s$ and $t$ directions. In general then, given any operator $P$ at $x$, and a direction $v$, we can make an orthogonal change of variable so that $v$ points along one of the new coordinate axes, say the $x_1$-axis. Now, if the coefficient of the partial derivative $(\partial/\partial x_1)^k$ (in the new coordinate system) is nonzero at $x$, then we can say that $P$ is of order $k$ at $x$ in the direction $v$. We refer to such situation as \emph{noncharacteristic}. That is, \emph{characteristic} refers to the case in which the coefficient of $(\partial/\partial x_1)^k$ vanishes at $x$.

The \emph{total symbol} or \emph{top-order symbol} of $P$ is simply a polynomial obtained by replacing the derivative $D$ with a variable, say $\xi$. The \emph{principal symbol}, denoted $\sigma_P(\xi)$, is the highest degree component of the total symbol. That is,
\begin{equation}
\sigma_P(\xi) = \sum_{|\alpha|=k} a_\alpha \xi^\alpha.
\end{equation}
The principal symbol almost completely determines the qualitative behavior of the solutions of the system. Furthermore, it is well known that for hyperbolic (as well as parabolic) partial differential equations, the zeros of the principal symbol describe the characteristics of the system. That is to say, \emph{the characteristic directions are exactly those for which the principal symbol vanishes}. Indeed the ``standard procedure'' is to begin with the equation of motion or the field equation, keep only the highest derivative terms, and then replace the said derivatives $\partial_\mu \partial_\nu \cdots \partial_\kappa$ with components of the normal vector to the characteristics $k_\mu k_\nu \cdots k_\kappa$ and set the equation to zero. This is called the \emph{characteristic equation}. We can study how the characteristics propagate by looking for what sorts of vectors are allowed as solutions to the characteristic equation.

Note that for a characteristic equation that is a \emph{matrix} equation,
called the \emph{characteristic matrix},
it suffices to consider the vanishing of the determinant of the characteristic equation (known as the \emph{characteristic determinant}) instead of the vanishing of the equation itself (the latter is of course a stronger statement).

In this paper, we are interested in one (or a few) special characteristic direction that gives the signal of a superluminal mode.
By analyzing the characteristic determinant, we can examine all directions at once.

The \emph{characteristic surface} which is orthogonal to the characteristic direction coincides with the edge of Cauchy development, i.e., the Cauchy horizon.
This is because the higher order derivative term with respect to the characteristic direction disappears.
The disappearance of the higher order derivative term causes the evolution to be singular.
Thus, the Cauchy development of the characteristic surface is only the characteristic surface itself and the evolution from it is meaningless even if it is spacelike.

Note that for application in theories of gravity, we would like our characteristic
directions to be null or spacelike, as a timelike characteristic direction is the signal of superluminal propagation
(and could violate causality, although this is not necessarily so even in theories with local Lorentz invariance~\cite{Bruneton, Afshordi, Geroch1}). Correspondingly, there should \emph{not} be any non-trivial solution that corresponds to any \emph{timelike} vector.

\subsection{Example: Nonlinear Proca Field} 
\label{Proca}

The characteristic method is best illustrated via an example or two.

Consider the Lagrangian of the Proca field with a nonlinear term~\cite{Velo:1970ur}:
\begin{equation}
\mathcal{L}=-\frac{1}{4}F^{\mu\nu}F_{\mu\nu} - \frac{1}{2}m^2A^\mu A_\mu - \frac{1}{4}\lambda(A^\mu A_\mu)^2,
\end{equation}
where $F_{\mu\nu} = \partial_\mu A_\nu - \partial_\nu A_\mu$ and $m, \lambda = \text{const.}$ The signature is $(-,+,+,+)$. The field equations are
\begin{equation}
\partial_\mu F^{\mu\nu} - m^2A^\nu - \lambda A^\mu A_\mu A^\nu = 0. \label{nlprocaeq}\\
\end{equation}
 Upon taking the divergence of this equation, because the first term vanishes by the antisymmetry of $F^{\mu\nu}$, we obtain an implicit constraint satisfied by the system:
\begin{equation}
(m^2 + \lambda A^\mu A_\mu)\partial_\nu A^\nu + 2\lambda A^\mu A^\nu \partial_\nu A_\mu = 0.
\end{equation}
For $\lambda = 0$, this gives $\partial_\nu A^\nu = 0$, which upon substituting back into the field equation (\ref{nlprocaeq}) gives
\begin{equation}
(\partial_\mu \partial^\mu - m^2)A^\nu = 0,
\end{equation}
which is the Klein-Gordon equation. The characteristic equation for this case is just $k_\mu k^\mu = 0$, i.e., the characteristic is null.

To find the characteristics in the general $\lambda$ case, replace the highest derivative terms $\partial^NA$ by $k^N\tilde A$.  Here $\tilde A^\mu$ means it is not the value of the vector $A^\mu$, but represents the change of the vector in a certain direction.

We then get the relations
\begin{eqnarray}
k\cdot k\tilde A^\nu-k^\nu k\cdot\tilde A&=&0,\label{procafechar}\cr
(m^2+\lambda A\cdot A)k\cdot \tilde A+2\lambda A\cdot k A\cdot \tilde A&=&0.  
\end{eqnarray}
A linear combination of these gives
\begin{equation}
\left[(m^2+\lambda A\cdot A)k\cdot k +2\lambda (A\cdot k)^2\right]k\cdot \tilde A=0.\label{anomolous}
\end{equation}
Now from eq.~(\ref{procafechar}) one can see that modes with $k\cdot \tilde A=0$ propagate with null characteristics, but from eq.~(\ref{anomolous}) one can see that modes with $k\cdot\tilde A\ne0$  generally have non-null characteristics.  The normal to the characteristic surface could be timelike, indicating a superluminal characteristic.  In that case there is a Lorentz frame in which $k^\mu=(1,0,0,0)$. In such a frame  the condition is
\begin{equation}
m^2+\lambda(-3 A_0^2+A_iA^i)=0.\label{procachar}
\end{equation}

We want to emphasize that this 
tachyonic mode of the nonlinear Proca field can be detected using the Hamiltonian formulation, as we now explain. From the Lagrangian, the canonical momenta are
\begin{equation}
\pi^\mu = \frac{\partial \mathcal{L}}{\partial \dot{A}_\mu}.
\end{equation}
That is, explicitly, $\pi^i = F_0{}^i$ and $\pi^0 = 0$; the latter is a \emph{primary constraint}.

The Hamiltonian density, constructed according to the Dirac-Bergmann constraint procedure \cite{Dirac1, Dirac2, Bergmann, HRT}, is~\cite{Nester8}
\begin{flalign}
&\frac{1}{2}\pi^i\pi_i + \frac{1}{4}F^{ij}F_{ij} - A_0 \partial_c \pi^c + \frac{1}{2} m^2(A^iA_i - A_0^2) \\ \nonumber
 &~+ \frac{1}{4}\lambda (A^iA_i - A_0^2)^2+u\pi^0,
\end{flalign}
where the primary constraint $\pi^0 \approx 0$\footnote{Here ``$\approx$'' denotes Dirac's weak equality, i.e., it only holds on the constraint surface within the phase space.} has been included with an unknown Lagrange multiplier $u$. From the Hamiltonian evolution equation 
\begin{equation}
\dot A_0(x)=\{A_0(x),H\}=u(x),
\end{equation}one finds the meaning of the multiplier, it is the missing ``velocity''.  Preserving the primary constraint leads to the secondary constraint:
\begin{equation}
\chi := \partial_c \pi^c + m^2A_0 + \lambda(A^iA_i - A_0^2)A_0 \approx 0.
\end{equation}
The Poisson bracket of the two constraints is
\begin{equation}
\left\{\pi^0(x), \chi(y)\right\} = \left[m^2 + \lambda(A^iA_i - 3A_0^2)\right]\delta^3(x-y).
\end{equation}
Generically this is non-vanishing, so they make up a 2nd class pair.

However, there is an important exception when the RHS vanishes, \emph{this is exactly the same as the anomalous characteristic condition} (\ref{procachar}).
The dynamical consequence shows up when we require preservation of the $\chi$ constraint, which is the relation that determines the ``unknown multiplier'' $u$:
\begin{equation}
0\approx\dot\chi(x)=\{\chi(x),H\}=u(x)[m^2 + \lambda(A^iA_i - 3A_0^2)] + \mathcal{G},
\end{equation}
where $\mathcal{G}$ denotes a collection of field dependent terms which, generically, are non-vanishing. 
Consequently,  the field ``velocity'' $\dot A_0$, given by the Lagrange multiplier, $u$, becomes unbounded at any point(s) where $\Delta:=m^2+\lambda(A^i A_i-3A_0^2)$ approaches 0.
That is,
\begin{equation}
\lim_{\Delta\to 0}\dot A_0=\lim_{\Delta\to 0} u(x)=\lim_{\Delta \to 0} -\frac{\mathcal{G}}{\Delta} = \infty.
\end{equation}
This is an indication that, with respect to this constant time spacelike hypersurface, there is instantaneous propagation of the $A_0$ mode.
The nonlinear constraint has led to a field-dependent constraint Poisson bracket value, a signal for the occurrence of superluminal propagation, in complete agreement with the previous analysis obtained from the characteristic equation.

\subsection{Example: Scalar Field and Maxwell Field}

One issue that we have to be careful with is the possible presence of gauge degrees of freedom in a theory (which was \emph{not} present in the previous nonlinear Proca field example).

In order to appreciate the issue, first we shall look at the simple example of scalar fields.
We consider the Lagrangian for the single scalar field defined by
\begin{eqnarray}
{\cal{L}}_S = -\frac{1}{2}\partial_\mu \phi_1 \partial^\mu \phi_1. \label{LagS}
\end{eqnarray}
We know that the scalar field $\phi_1$ propagates in the null direction.
In terms of characteristics, we can understand it as follows.
The equation of motion is
\begin{eqnarray}
\partial_\mu \partial^\mu \phi_1=0.
\end{eqnarray}
So the characteristic equation for $\phi_1$ is $k_\mu k^\mu=0$, which means the characteristic is null.

Let us introduce another scalar field $\phi_2$ which does not appear in the Lagrangian.
Trivially, the Lagrangian does not change under the transformation
$\phi_2 \to \phi'_2 = \phi_2 + f(x^\mu)$.
This property is 
 a certain kind of gauge transformation.
The equation of motion which comes from the variation of the Lagrangian (\ref{LagS}) with respect to $\phi_2$ is trivial, i.e., $0=0$.
Thus the characteristic equation for $\phi_1$ and $\phi_2$ becomes $k_\mu k^\mu \times 0=0$,
which is trivially satisfied for any $k_{\mu}$.
The triviality comes from the gauge mode $\phi_2$.
Since $\phi_2$ does not appear in the Lagrangian, we should not include it in the discussion of characteristics.

The situation for any gauge field theory is similar.
For simplicity, we consider the Maxwell theory of the electromagnetic field; the Lagrangian is
\begin{eqnarray}
{\cal{L}}_{\text{EM}} = -\frac{1}{4}F_{\mu\nu}F^{\mu\nu},
\end{eqnarray}
where $F_{\mu\nu}=\partial_\mu A_\nu-\partial_\nu A_\mu$.
The equation of motion is obtained to be
\begin{eqnarray}
\partial_\mu F^{\mu\nu}=0. \label{EOMMax2}
\end{eqnarray}
The principal symbol yields the characteristic equation
\begin{equation}
k_\mu k^\mu \tilde{A}^{\nu} - k_{\mu} k^{\nu} \tilde{A}^{\mu}= 0,
\end{equation}
where we want to seek some nontrivial $\tilde{A}^\lambda$. If $k^\mu$ is null 
 then $k_\mu \tilde{A}^\mu = 0$ gives a propagating mode on the null cone. However, if $k^\mu$ is 
 not null, then the characteristic equation can be satisfied by \emph{any} $\tilde{A}^\lambda=Ck^\lambda$ for arbitrary constant $C$. In particular $k^\mu$ can be chosen to be timelike, which means that the characteristic direction will become timelike.
This superluminal mode however, as we shall see, actually corresponds to gauge mode and so it is not physical.

Alternatively we can consider the determinant method. If we write down the characteristic equation without removing the gauge degree of freedom,
we will obtain characteristic matrix that satisfies
\begin{eqnarray}
\det \left[ k_\mu k^\mu g_{\alpha \beta}- k_\alpha k_\beta \right] =0.
\label{EOMMax}
\end{eqnarray}
The left-hand side of this equation is $0$ for any $k_\mu$, and so this equation becomes trivial.
The origin of the triviality is the gauge degree of freedom.
If we operate $\partial_\nu$ to the left-hand of eq.~(\ref{EOMMax2}), it becomes
algebraically $0$.
Therefore, eq.~(\ref{EOMMax2}) has the information of only three independent equations.
We know the Maxwell field has $U(1)$-gauge, i.e., the Lagrangian is invariant under
the $U(1)$-gauge transformation $A_\mu\to A'_\mu=A_\mu+\partial_\mu \Lambda$.
Since the gauge degree of freedom $\Lambda$ does not appear in the Lagrangian,
the variation of the Lagrangian with respect to the gauge degree of freedom $\Lambda$ must be
trivial, which is the origin of the above trivial equation.
According to the above simple example of scalar fields,
we should eliminate the contribution from gauge degrees of freedom in the discussion of characteristics.
Subtracting the contribution from the gauge degree of freedom, i.e.,
considering the $3\times 3$ matrix whose basis
can be anything independentof $k_\mu$,
we can obtain the exact characteristic equation,
which is satisfied only when $k_{\mu}$ is null. A more rigorous treatment of the removal of gauge degrees of freedom and the characteristics of Maxwell equations can be found in \cite{Christodoulou}.

\section{Characteristics Equations and Problematic Time Evolution in $f(T)$ Gravity}
\label{f(T)}

In this section, we discuss the Cauchy problem for $f(T)$ gravity by an analysis of the characteristics.
First, we shall determine  
 the characteristic equations of $f(T)$ gravity.
After that, we will show that a constant-time hypersurface of the FLRW metric does not provide a good initial condition and that the time evolution of the FLRW metric is not well behaved.
Finally, we concretely show the problematic solution where the time evolution from a FLRW initial condition is not unique.

\subsection{Characteristic Equation of $f(T)$ Gravity} 
\label{characteristic matrix}

From eq.~(\ref{vfe}), we see that the vacuum field equation in $f(T)$ gravity is explicitly given by
\begin{flalign}
0&=\left[\partial_\mu (ee_{a}{}^\rho S_{\rho}^{~\mu\nu}) - ee_a{}^\lambda T^\rho_{~\mu\lambda}S_\rho^{~\nu\mu}\right]f_T - \frac{1}{4}ee_a{}^\nu f\\ \nonumber
 &+ ee_{a}{}^\rho S_{\rho}^{~\mu\nu}(\partial_\mu T)f_{TT}. \label{ftt} 
\end{flalign}
This can be rewritten as 
\begin{widetext}
\begin{equation}
0=f_T(\partial_\mu e)S_a{}^{\mu\nu} + 2e\left[f_T\frac{\partial S_a{}^{\mu\nu}}{\partial T^b{}_{\alpha\beta}}+S_a{}^{\mu\nu}f_{TT}\left(S_b{}^{\alpha\beta}+T^c{}_{\kappa\rho}\frac{\partial S_c{}^{\kappa \rho}}{\partial T^b_{~\alpha\beta}}\right) \right]\partial_\mu \partial_{\alpha} e^b_{~\beta} 
- ee_a{}^\lambda T^\rho_{~\mu\lambda}S_\rho^{~\nu\mu}f_T - e\frac{1}{4}e_a{}^\nu f.
\end{equation}
\end{widetext}

The highest order derivative terms are just
\begin{equation}\label{highest}
2e\left[f_T\frac{\partial S_a{}^{\mu\nu}}{\partial T^b{}_{\alpha\beta}}+S_a{}^{\mu\nu}f_{TT}\left(S_b{}^{\alpha\beta}+T^c{}_{\kappa\rho}\frac{\partial S_c{}^{\kappa \rho}}{\partial T^b_{~\alpha\beta}}\right) \right]\partial_\mu \partial_\alpha e^b_{~\beta}.
\end{equation}
We recall that $S_a^{~\mu\nu}$ is \emph{linear} in $T^b_{~\alpha\beta}$, so we may write
\begin{equation}
S_a^{~\mu\nu} = M_a^{~\mu\nu}{}_b^{~\alpha\beta} T^b_{~\alpha\beta}
\end{equation}
for some $M_a^{~\mu\nu}{}_b^{~\alpha\beta}$ which is \emph{algebraic} in the tetrad $e^c_{~\rho}$.
The explicit form of $M_a^{~\mu\nu}{}_b^{~\alpha\beta}$ is
\begin{eqnarray}
M_a^{~\mu\nu}{}_b^{~\alpha\beta} &=& \frac{1}{4} \eta_{ab} g^{\mu [\alpha} g^{\beta ] \nu}
-\frac{1}{2}e_b^{\ [ \mu} g^{\nu ] [ \alpha}e_a^{\ \beta ]}\nonumber \\ &+& e_a^{\ [ \mu} g^{\nu ] [ \alpha}e_b^{\ \beta ]},
\end{eqnarray}
where the bracket $[\cdots ]$ means antisymmetrization, i.e.,
\begin{eqnarray}
t^{a\cdots [bc]\cdots} := \frac{1}{2} \left( t^{a\cdots bc \cdots} -t^{a\cdots cb \cdots}\right).
\end{eqnarray}
Consequently we have
\begin{equation}
T^c_{~\kappa\rho}\frac{\partial S_c^{~\kappa \rho}}{\partial T^b_{~\alpha\beta}} = T^c_{~\kappa\rho} M_c^{~\kappa\rho}{}_b^{~\alpha\beta} = S_b^{~\alpha\beta}.
\end{equation}
Thus, we can re-write eq.~(\ref{highest}) in a more simplified form as
\begin{equation}\label{char}
2e\left[f_T M_a^{~\mu\nu}{}_b^{~\alpha\beta}+ 2f_{TT}S_a^{~\mu\nu}S_b^{~\alpha\beta}\right]\partial_\mu\partial_\alpha e^b_{\beta}.
\end{equation}
Following the standard procedure we can replace $\partial_\nu\partial_\alpha e^b_{~\beta} \to k_\nu k_\alpha \tilde{e}^b_{~\beta}$
in eq.~(\ref{char}) to obtain the \textit{characteristic equations for $f(T)$ theory of gravity}
\begin{equation}\label{chareqn}
\left[f_T M_a^{~\mu\nu}{}_b^{~\alpha\beta}+ 2f_{TT}S_a^{~\mu\nu}S_b^{~\alpha\beta}\right]k_\mu k_\alpha \tilde{e}^b_{~\beta} = 0; 
\end{equation}
where
\begin{equation}
 M_a^{~\mu\nu}{}_b^{~\alpha\beta}=\dfrac{\partial S_a^{~\mu\nu}}{\partial T^b_{~\alpha\beta}}.\\
\end{equation}
Here the notation $\tilde{e}^b_{~\beta}$ indicates that these are not the values of the frame, but rather represent the change of the frame in a certain direction. 
We note that eq.~(\ref{chareqn}) thus provide $16$ quadratic polynomial conditions on the four $k_\mu$'s.

That is, the characteristic matrix becomes
\begin{eqnarray}
{\cal M}_a^{~\nu}{}_b^{~\beta}\equiv2e\left[f_T M_a^{~\mu\nu}{}_b^{~\alpha\beta}+ 2f_{TT}S_a^{~\mu\nu}S_b^{~\alpha\beta}\right]k_\mu k_\alpha.
\end{eqnarray}
Since the tetrad $e^a_{~\mu}$ always has the inverse $e_a^{~\mu}$,
the property of the matrix ${\cal M}_a^{~\nu}{}_b^{~\beta}$ is the same as that of
\begin{eqnarray}
{\cal M}^{\lambda\nu\gamma\beta}&\equiv&{\cal M}_a^{~\nu}{}_b^{~\beta}e^{a\lambda}e^{b\gamma}
\nonumber\\
&=&2e\left[f_T M^{\lambda\mu\nu\gamma\alpha\beta}+ 2f_{TT}S^{\lambda\mu\nu}S^{\gamma\alpha\beta}\right]k_\mu k_\alpha.\label{charm}
\end{eqnarray}
Therefore for convenience, we shall analyze the matrix in eq.~(\ref{charm}) instead.

Before discussing the property of the characteristic matrix of eq.~(\ref{charm}),
we must first find out the gauge degrees of freedom. We have 16 equations of motion (\ref{vfe}), but not all of them are independent due to gauge degrees of freedom.
We actually have four nontrivial identities.

Indeed, let us consider the invariance of the action under an infinitesimal coordinate transformation:\footnote{Geometrically this represents an infinitesimal diffeomporhism.}
\begin{eqnarray}
x^\mu \to {x'}^\mu = x^\mu + \xi^\mu(x).
\end{eqnarray}
This leads to the transformation of the tetrad $e^a_{~\mu}$ by
\begin{eqnarray}
\delta e^a_{~\mu}=\partial_\mu \left(- e^a_{~\alpha} \xi^\alpha \right)  - T^a_{\ \mu\nu}e_b^{\ \nu}\left(- e^b_{~\alpha} \xi^\alpha \right),
\end{eqnarray}
and we know this is related to a gauge mode, i.e., $-e^a_{~\alpha} \xi^\alpha=F^a$, where $F^a$ are four scalar fields with an internal index.
Varying the action with respect to the component
\begin{eqnarray}
e^{a\ \mbox{\text{(gauge)}}}_{\ \mu}= \partial_\mu F^a - T^a_{\ \mu\nu}e_b^{\ \nu}F^b, 
\label{gauge-e}
\end{eqnarray}
one obtains the nontrivial Noether differential identities \cite{Noether}
\begin{eqnarray}
\partial_\mu \left( X_a^{\ \mu} \right) + e_a^{\ \nu} T^b_{\ \mu\nu} X_b^{\ \mu} =0,
\label{identity}
\end{eqnarray}
where $X_a^{\ \mu}$ is defined in eq.~(\ref{vfe}).
We stress that eq.~(\ref{identity}) is an \emph{identity}, i.e., it is \emph{algebraically satisfied} without using the equation of motion eq.~(\ref{vfe}).

Since the infinitesimal coordinate transformation also leads to a transformation of metric, the
metric has the information about all the gauge degrees of freedom.
Therefore, all the degrees of freedom of the gauge modes can be eliminated by gauge fixing of the metric components.

Now, we move on to the discussion of the characteristic determinant which is constructed by the $12\times12$ part of the characteristic matrix in eq.~(\ref{charm}) having removed the gauge degrees of freedom.
Here, we analyze the characteristic matrix ${\cal M}^{\lambda\nu\gamma\beta}$ by the decomposition into symmetric and antisymmetric parts with respect to $\nu$ and $\lambda$ and also with respect to $\beta$ and $\gamma$. Without loss of generality we can choose an appropriate basis to write down the $12 \times 12$ matrix such that the left half $12 \times 6$ submatrix represents the symmetric part and the right half  $12 \times 6$ submatrix represents the antisymmetric part.

First, we consider the first term of the characteristic matrix in eq.~(\ref{charm}). The first term of the symmetrized characteristic matrix is well behaved because it is the only contribution in TEGR.
The symmetric part of the characteristic equation is related to the metric component, which \emph{a priori} has 10 degrees of freedom. However we know that 4 of the degrees of freedom are gauge modes and we have only 6 physical modes.
We claim that its rank is generically 6, and indeed after removing the gauge degrees of freedom the second order derivatives are associated with all components of the symmetrized tetrad.
We consider next the antisymmetrized component
with respect to $\nu$ and $\lambda$. The first term then gives zero contribution, which means that the rank of the first term is indeed $6$ for general $k_\mu$.

To summarize, the characteristic matrix of $f(T)$ gravity, after removing the gauge degrees of freedom, can be written as a $12 \times 12$ matrix, with four $6 \times 6$ block submatrices.
The generic rank of the upper left block is 6, corresponding to the symmetrized tetrads.

Next, we consider the antisymmetrized components of the second term in eq.~(\ref{charm}).
The form of the second term is the product of two $S^{\lambda\nu\mu}k_\mu$ and its rank is one (since one of the eigenvalue vanishes).
One can thus form a $2 \times 2$ matrix at the very center of the $12 \times 12$ matrix, in which each element belongs to one of the four blocks, respectively. 
Together with the $6 \times 6$ block corresponding to the symmetrized tetrad, we thus get a $7 \times 7$ matrix.

Therefore we have found that the ranks of the first term and the second term of ${\cal M}^{\lambda\nu\gamma\beta}$ are 6 and 1, respectively. By the fact that $S^{[\lambda\nu]\mu} k_\mu$ is generically nonzero, we can prove that the rank of matrix ${\cal M}^{\lambda\nu\gamma\beta}$ is generically 7. This rests on the fact that the determinant of the $7 \times 7$ matrix is generically nonzero, which follows from elementary linear algebra (See Appendix.\ref{Appendix}).

It is important to note that the $2 \times 2$ matrix at the center, being of rank 1, corresponds to only \emph{one} of the extra degrees of freedom in $f(T)$ gravity.  However we know that the theory contains 3 extra degrees of freedom -- so where are the remaining two?

In fact the five antisymmetrized components corresponding to  the tetrad $e^{(i)a}{}_\nu$ ($i=1,\cdots,5$), which satisfy
\begin{eqnarray}\label{5part}
S^{[\lambda\nu]\mu} e_{a\lambda} \partial_\mu e^{(i)a}{}_\nu =0,
\end{eqnarray}
become zero.
This means that in the equations for the five tetrads $e^{(i)a}{}_\nu$ the second order derivative does not appear and that the characteristic equation should be at most a first order differential equation. In other words, the remaining two simply \emph{do not} show up in the characteristic matrix!

Thus, in order to obtain the complete characteristic we must derive the characteristic equation for $e^{(i)a}{}_\nu$ separately. This is what we will do next for the sake of completeness, although it is not a crucial part for our subsequent analysis (since it is sufficient to show that the one degree of freedom that \emph{does} show up in the characteristic matrix, leads to problematic behavior).

In order to derive the characteristic equation for $e^{(i)a}{}_\nu$, we must go back to the equation of motion eq.~(\ref{vfe}),
which can be written as
\begin{eqnarray}\label{eqanti}
\partial_\mu \left( e S_a^{~\mu\nu}f_T\right)-\frac{1}{4}e e_a^{~\nu}f-e e_a^{~\lambda}T^\rho_{~\mu\lambda}S_\rho^{~\nu\mu} f_T=0.
\end{eqnarray}
If we consider equation of motion from the variation of the action with respect to $e^{(i)a}{}_\nu$, we can easily show that the first term in eq.~(\ref{eqanti}) \emph{does not appear}.
The equation of motion (\ref{eqanti}) is obtained from variation of the action
\begin{eqnarray}
\delta S &=& \int d^4x ~\delta e^a_{~\nu} \Bigl[ \partial_\mu  \left(e S_a^{~\mu\nu}f_T\right)
\nonumber\\
&&\qquad\qquad\qquad
-\frac{1}{4}e e_a^{~\nu}f-e e_a^{~\lambda}T^\rho_{~\mu\lambda}S_\rho^{~\nu\mu} f_T
\Bigr]\nonumber\\
&=&- \int d^4x \Bigl[ \left( \partial_\mu \delta e^a_{~\nu} \right)e S_a^{~\mu\nu}f_T 
\nonumber\\
&&\qquad
+\delta e^a_{~\nu}\left(\frac{1}{4}e e_a^{~\nu}f+e e_a^{~\lambda}T^\rho_{~\mu\lambda}S_\rho^{~\nu\mu} f_T\right)\Bigr],
\end{eqnarray}
and for $\delta e^a_{~\nu}= \delta e^{(i)a}{}_\nu$ the first term of the right-hand side becomes zero due to eq.~(\ref{5part}).
The second term in eq.~(\ref{eqanti}) is also zero because of antisymmetrization.
Eventually, only the last term in eq.~(\ref{eqanti}) gives a non-zero contribution.

Although the last term in eq.~(\ref{eqanti}) seems to be a higher order equation of the first order derivative of $e^{(i)a}{}_\nu$ (which, if true, will render the characteristic method unworkable), we can show that it is actually linear. Indeed, we can multiply it by $-e^{a\alpha}/e$ for convenience. We then have the antisymmetrized equation with respect to $\alpha$ and $\nu$
\begin{eqnarray}
\contraction{T^{\rho}}{{}^\alpha}{{}_{\mu}S_\rho}{{}^\nu}
T^{\rho}{}^\alpha{}_{\mu}S_\rho{}^\nu{}^\mu f_T=0,
\end{eqnarray}
where
\begin{eqnarray}
\contraction{T^{\alpha \cdots}}{{}^{\beta}}{{}^{\cdots}}{{}^{\gamma}}
T^{\alpha \cdots}{}^{\beta}{}^{\cdots}{}^{\gamma}{}^{\cdots} :=
\frac{1}{2}\left( T^{\alpha\cdots\beta\cdots\gamma\cdots}- T^{\alpha\cdots\gamma\cdots\beta\cdots}\right).
\end{eqnarray}
This gives the two branches of the equation, $f_T=0$ and
\begin{eqnarray}
\contraction{T^{\rho}}{{}^\alpha}{{}_{\mu}S_\rho}{{}^\nu} T^{\rho}{}^\alpha{}_{\mu}S_\rho{}^\nu{}^\mu =0. \label{antie}
\end{eqnarray}
If we choose the first branch, i.e., $f_T=0$, the five equations become degenerate, and thus, time evolution cannot be fixed uniquely.
This causes trouble regarding the  well-posedness of the Cauchy problem.
Therefore, we choose the second branch.\footnote{
There is however \emph{a priori} neither a physical principle nor a mathematical requirement that requires us to choose the second branch. This might be another issue of $f(T)$ gravity.}
The second order terms of the first derivative in eq.~(\ref{antie}) can be written as
\begin{eqnarray}
g^{\mu\beta}\left(-\frac{1}{2}g^{\gamma[\alpha}e_a^{~\nu]}e_b^{~\rho}+g^{\gamma[\alpha}e_b^{~\nu]}e_a^{~\rho}\right)\partial_{[\beta}e^a_{\rho]}\partial_{[\mu}e^b_{~\gamma]}=0. \label{2nd}
\end{eqnarray}
If both derivative factors are variations of $e^{(i)a}{}_\nu$, eq.~(\ref{2nd}) becomes trivial. Thus at least one of the derivative term is \emph{not} such a variation.
This means that eq.~(\ref{antie}) is not second order but first order with respect to $\partial_\mu e^{(i)a}{}_\nu$ and the characteristic equation for $e^{(i)a}{}_\nu$ can be shown to be
\begin{flalign}
&\contraction{\biggl( \frac{1}{4}g}{{}^{\beta}}{{}^{[\alpha}T^{\nu]\mu}}{{}^{\lambda}}
\contraction{ \biggl( \frac{1}{4}g{}^{\beta}{}^{[\alpha}T^{\nu]\mu}{}^{\lambda}
+g}{{}^{\alpha}}{{}^{[\beta}T^{\lambda]\mu}}{{}^{\nu}}
\biggl( \frac{1}{4}g{}^{\beta}{}^{[\alpha}T^{\nu]\mu}{}^{\lambda}
+g{}^{\alpha}{}^{[\beta}T^{\lambda]\mu}{}^{\nu} 
+ \frac{1}{2}T^{[\alpha\nu][\beta}g^{\lambda]\mu}
+\frac{1}{2}T^\mu g^{\alpha[\lambda}g^{\beta]\nu} \nonumber \\ &+\frac{1}{2}T^{[\beta}g^{\lambda][\nu}g^{\alpha]\mu}\biggr) e_{a\beta} k_\mu \dot e^{(i)a}{}_\lambda =0, \nonumber\\
\label{charaanti}
\end{flalign}
where the notation $\dot e^{(i)a}{}_\lambda$ is similar to $\tilde e^a_{~\nu}$, i.e.,
it also represents the change of the frame in a certain direction.
Eq.~(\ref{charaanti}) gives the characteristic matrix for $e^{(i)a}{}_\nu$:
\begin{eqnarray}
{\cal M}_{(\text{anti})}^{\alpha\nu\beta\lambda} &=&
\contraction{\biggl( \frac{1}{4}g}{{}^{\beta}}{{}^{[\alpha}T^{\nu]\mu}}{{}^{\lambda}}
\contraction{ \biggl( \frac{1}{4}g{}^{\beta}{}^{[\alpha}T^{\nu]\mu}{}^{\lambda}
+g}{{}^{\alpha}}{{}^{[\beta}T^{\lambda]\mu}}{{}^{\nu}}
\biggl( \frac{1}{4}g{}^{\beta}{}^{[\alpha}T^{\nu]\mu}{}^{\lambda}
+g{}^{\alpha}{}^{[\beta}T^{\lambda]\mu}{}^{\nu} \nonumber \\
&+&\frac{1}{2}T^{[\alpha\nu][\beta}g^{\lambda]\mu}
+\frac{1}{2}T^\mu g^{\alpha[\lambda}g^{\beta]\nu}+\frac{1}{2}T^{[\beta}g^{\lambda][\nu}g^{\alpha]\mu}\biggr) k_\mu. \nonumber\\
{}
\end{eqnarray}
Then, we have the exact characteristic matrix which is a combination of the $7\times 7$ matrix 
${\cal M}^{\alpha\nu\beta\lambda}$ and the $5\times 5$ matrix ${\cal M}_{(\text{anti})}^{\alpha\nu\beta\lambda}$.%
\footnote{
The cross terms of the 5 components $e^{(i)a}{}_\nu$ and the other 7 do not contribute to the characteristic matrix, since their components have only one derivative.
Since the nonzero components of the $7\times 7$ matrix ${\cal M}^{\alpha\nu\beta\lambda}$ have two derivatives, the contribution of the cross terms with  any smaller number of derivatives should not appear in the characteristic equation.}

To summarize, the remaining two extra degrees of freedom do not show up in the characteristic matrix, since their corresponding characteristic equation only contains \emph{one} derivative term.  They can nevertheless be described by another $5 \times 5$ characteristic matrix ${\cal M}_{(\text{anti})}^{\alpha\nu\beta\lambda}$. We remark that despite the single derivative term here, this \emph{does not} necessary mean that the \emph{dynamics} of the two extra degrees of freedom is governed by first order equations. This may be analogous to the Dirac equation, which is first order and linear. However, together with its conjugate equation, one can derive a \emph{second order} Klein-Gordon equation governing the dynamics of the system. Likewise, Maxwell's equations are first-order but combining them gives the second order wave equation.

In the next section we will give concrete examples to demonstrate problematic behaviors that could arise in $f(T)$ gravity due to bad characteristics.
To appreciate what is likely to go wrong with the characteristic, let us look at the characteristic equation eq.~(\ref{chareqn}) again.
We see that the first term is good since it is the only term present in TEGR, and we know that characteristics are null in general relativity. In fact the first term gives dynamics to the 6 metrical components of $\tilde{e}^b_{~\beta}$. The second term quadratic in the field strength
$S^a_{~\mu\nu}k^\mu$ governing \emph{one} of
the extra degrees of freedom however is almost certainly going to give a disaster if $S^a_{~\mu\nu}k^\mu=0$ for timelike $k^\mu$.
It suggests that with some field values the characteristics could leak outside the metric null cone, which in turn means that there might be problems with tachyonic modes. This nonlinear behavior is reminiscent of the nonlinearity we encountered in the generalized Proca field example.


Note that the case for TEGR is very different since the theory is locally Lorentz invariant, which allows us to choose a gauge that simplifies a lot of the calculation. In any case, TEGR corresponds to $f_{TT}=0$
for any value of $T$ and there is no problem with the characteristics.

\subsection{Absence of Cauchy Development from a Constant-Time Hypersurface in FLRW Geometry}

In this section, we will show that in $f(T)$ gravity the time direction, i.e., $(\partial/\partial t)^\mu$, in the FLRW metric is the characteristic direction, and thus the time evolution of the FLRW metric is not unique \emph{even in the infinitesimal future}.
We analyze the characteristic equation in $f(T)$ gravity which has been derived in the previous subsection.
The characteristic determinant can be decomposed into two parts, the $7\times 7$ matrix ${\cal M}^{\alpha\nu\beta\lambda}$ and the $5\times 5$ matrix ${\cal M}_{(\text{anti})}^{\alpha\nu\beta\lambda}$.
In our analysis we look into the $7\times 7$ matrix ${\cal M}^{\alpha\nu\beta\lambda}$, which has the information about all of the six symmetric components and one of the antisymmetric components.
The antisymmetric component appears only in the second term of the characteristic matrix in eq.~(\ref{charm}), and thus for some $k^\mu$, the determinant becomes zero if the value of its antisymmetric component is zero.
This means that $k^\mu$ is one of the characteristic directions.
To see this, it is enough to solve $S_{[\lambda\nu]\mu}k^\mu=0$ for $k^\mu$.
In the case of the FLRW metric, $k^\mu=(\partial/\partial t)^\mu$ is a solution.

Let us see this explicitly.
We consider the cotetrads
\begin{eqnarray}
&&e^0_{\ \mu}dx^\mu= dt,\nonumber\\
&&e^a_{\ \mu}dx^\mu= a(t) \delta^a_i dx^i. \label{FLRWtetrad}
\end{eqnarray}
These cotetrads give the FLRW geometry with flat spatial section:
\begin{eqnarray}
&&ds^2 = g_{\mu\nu}dx^\mu dx^\nu= -dt^2 + a^2(t) \delta_{ij}dx^i dx^j. \label{FLRW}
\end{eqnarray}
Given the cotetrads, we can calculate the torsion tensor $T^\lambda_{~\mu\nu}$ and $S^{\lambda\mu\nu}$.
They are 
\begin{eqnarray}
&&T^i_{\ tj} = H  \delta^i_j,\\
&&S^{ijt} = - a^{-2}H \delta^{ij},
\end{eqnarray}
while the other components all vanish.
Here $H$ is the Hubble parameter, i.e., $H= (d a/dt) /a$.
Then, it is trivial that $k^\mu=(\partial/\partial t)^\mu$ is a solution of $S^{[\lambda\nu]\mu}k_\mu=0$.

\subsection{Concrete Example of Non-Uniqueness of Time Evolution in FLRW Spacetime}

We have shown that the FLRW solution has a spacelike characteristic hypersurface, and thus we cannot properly discuss the time evolution from a constant time hypersurface even into the infinitesimal future.
Here, we show a concrete solution where the time evolution from the FLRW initial condition without torsion is \emph{not unique} in $f(T)$ gravity with or without matter.

We introduce a matter action  
\begin{eqnarray}
S_m=\int d^4x~ \sqrt{-g} \mathcal{L}_m.
\end{eqnarray}
Then, equation of motion (\ref{vfe}) is modified as
\begin{eqnarray}\label{mattereom}
X_a^{~\nu}=\frac{1}{4} \frac{\delta S_m}{\delta e^a_{~\nu}}.
\end{eqnarray}
Assume that the matter action is locally Lorentz invariant, i.e., it depends on the tetrad only through the metric. Then 
Eq.~(\ref{mattereom}) can be rewritten as 
\begin{eqnarray}
X_\mu^{~\nu}= \frac{1}{2} g_{\mu\alpha} \frac{\delta S_m}{\delta g_{\alpha \nu}}.
\label{vfe2}
\end{eqnarray}

Let us now consider the ansatz
\begin{eqnarray}
&&e^0_{~\mu} dx^{\mu}= \cosh \theta (t) dt + a(t) \sinh \theta (t) dx,\nonumber\\
&&e^1_{~\mu} dx^{\mu}= \sinh \theta (t) dt + a(t) \cosh \theta (t) dx, \nonumber\\
&&e^2_{~\mu} dx^{\mu}= a(t) dy, \label{ansatz}\\
&&e^3_{~\mu} dx^{\mu}= a(t) dz, \nonumber
\end{eqnarray}
which also gives the FLRW metric (\ref{FLRW}). 
For $\theta=0$, this ansatz is equal to eq.~(\ref{FLRWtetrad}).
We note that this is a good choice of tetrad despite the lack of local Lorentz invariance in the theory. In particular the vector fields are everywhere smooth and linearly independent and thus provide a parallelization of FLRW spacetime. This choice of tetrad gives a parallelization that belongs to a different equivalence class of parallelizations than the usual diagonal tetrad of FLRW spacetime (See the penultimate paragraph of section~\ref{review}), but it is nevertheless a solution to the field equation, as we will show below. Note that of course with the choice $a(t)\equiv 1$, our analysis below is in particular applicable to Minkowski spacetime.

The torsion tensor $T^\rho_{~\mu\nu}$ and $S_\rho^{\ \mu\nu}$ are given by
\begin{eqnarray}
&&T^t_{\ tx}
=-T^t_{\ xt}=a(t)\partial_t\theta(t),
\quad
T^x_{\ tx}=-T^x_{\ xt}=H, \nonumber
\\ 
&&
T^p_{\ tq}=-T^p_{\ qt}=H\delta^p_q,
\quad
S_p^{\ x q}=-S_p^{\ qx}= \frac{1}{2 a(t)} \partial_t \theta(t) \delta_p^q,  \nonumber
\\
&&
S_x^{\ t x}=-S_x^{\ xt}=H,
\quad
S_p^{\ t q}=-S_p^{\ qt }=H\delta_p^q,
\end{eqnarray}
where the other components are zero. Here $p,q \in \left\{y,z\right\}$.
From this one readily finds that\footnote{We remind the readers that we work in $(-,+,+,+)$ convention. The torsion scalar is $T=-6H^2$ if one works in the often used convention $(+,-,-,-)$ in the literature of $f(T)$ cosmology.}
\begin{eqnarray}
T=6H^2.
\end{eqnarray}
Substituting the above frame $e^a_{~\mu}$, metric, torsion tensor, torsion scalar and
$S_\rho^{\ \mu\nu}$ into the equation of motion (\ref{vfe2}), 
we can find that $\theta (t)$ does not appear in eq.~(\ref{vfe2}).
This means that if ansatz (\ref{FLRWtetrad}) (i.e. $\theta=0$) originally satisfies eq.~(\ref{vfe2}),
for \emph{any} form of $\theta (t)$ the ansatz (\ref{ansatz}) is indeed an \emph{exact} solution of $f(T)$ gravity.

Now, we can choose the form of the function $\theta (t)$ such that $\theta (t)=0$ for $t<0$ and
$\theta (t) \neq 0$ for $t>0$.
In this solution, on the spacelike hypersurface $t=0$ torsion suddenly emerges seemingly from nothingness,
\footnote{
This is probably due to superluminal propagating modes that come in from infinity. One can compare this to a similar phenomenon in anti-de Sitter (AdS) spacetime. AdS spacetime is not globally hyperbolic since although its (conformal) boundary is infinitely far away as measured along any spacelike path, massless particles can propagate to the boundary and back in finite time. In particular this means that, outside the Cauchy development,
the initial data does not uniquely determine the time evolution, due to information that can flow in from infinity. In our case however, the Cauchy problem is ill-defined not because of a peculiar geometry of the spacetime, but because of physical superluminal propagating degrees of freedom. In fact it is much worse than the AdS case since one can still at least have Cauchy development locally in AdS.}
and the value of $\theta (t) \neq 0$ for $t>0$ is arbitrary.
This means that---even in the infinitesimal future---time evolution is not uniquely determined by the initial data, in other words the Cauchy problem is not well-defined.\footnote{This example of an unpredictable solution has some similarities with an example that demonstrated unpredictability in another teleparallel theory, which appeared in the seminal work of Kopczy\'nski~\cite{Kopczynski}.}\footnote{Let us also remark that one could consider an initial tetrad which is like the one in this example only in a \emph{small region} on the initial spacelike hypersurface.  Such a tetrad makes an even nastier example.} 

Note that this problem does not arise in the case of TEGR due to local Lorentz invariance of the theory.

\section{Discussion}\label{discuss}

To conclude, in this work we have derived the set of partial differential equations that govern the characteristics of $f(T)$ theory of modified gravity, from which we see that there  will generally be a real danger of superluminal propagation that leaks outside of the metric null cone unless $f_{TT}=0$  for any value of $T$,
which is the case of TEGR.

We also took a closer look at the degrees of freedom in the theory. We find that $f(T)$ gravity is a highly nonlinear theory in which many things could go wrong. First of all, as commented by Li et al.\ the linearized theory has a \emph{different} number of degrees of freedom than the full theory, which has three extra degrees of freedom compared to TEGR (and thus GR). Such nonlinear behavior was further confirmed by careful linear perturbation on a flat FLRW background~\cite{IzumiOng}, in which \emph{none} of the three extra degrees of freedom appear even in the second order linear perturbation. In this work we note that the characteristic equation of $f(T)$ gravity, i.e., eq.~(\ref{chareqn}) contains a highly nonlinear second term.

Indeed, we see that generically it excites the six pieces of $\tilde{e}^b_{~\beta}$, essentially $e^{[\alpha\beta]}$.%
\footnote{
In more general theories with tetrads being the fundamental variables, not only $e^{[\alpha\beta]}$ but also $e^0_{~\beta}$ can be dynamical.
}
Of course, a more careful analysis by Li et al.\ showed that there are only 3 extra modes instead of 6. Regardless, we see that
the antisymmetric modes \emph{are} excited
in a highly nonlinear fashion that is very prone to causing troubles. In fact, these extra modes are certain to have a very singular weak field limit, since the linearized theory has only the metric degrees of freedom. The number of degrees of freedom and the number of constraints thus seem to be likely to depend on the amplitude of the field.  As the tetrad fields approach some specific values, some of the Poisson brackets are approaching zero, which in turn means that their corresponding ``velocity'' Lagrange multipliers tend to become unbounded, which signals the occurrence of \emph{instantaneous}, i.e., faster-than-light propagation, much like the situation we have exhibited for the nonlinear Proca field (See also the similar discussion in~\cite{Nester8}.). Moreover these extra dynamic parts are likely to propagate outside the metric null cone and are also likely to allow for negative energy propagating waves. By further analyzing the characteristic matrix of $f(T)$ gravity, we have shown that there is indeed a \emph{physical} superluminal mode that could arise from one of the three extra degrees of freedom. Closely related to this issue, we have also demonstrated that the Cauchy problem is ill-posed in flat FLRW spacetime and even in Minkowski spacetime. We feel that this is a very bad property for any theory of gravity.

This echoes the remark (Lesson 4) in~\cite{f(R)} that extra degrees of freedom in modified gravities are prone to give rise to complications that are very hard to control. Although that remark was made under the context of $f(R)$ gravity, it apparently also holds true for $f(T)$ gravity. We recall that $f(T)$ gravity was often thought of as being more well-behaved and easier to deal with compared to $f(R)$ gravity, since its equation of motion is of second order instead of fourth order. However it now seems that this advantage comes with many trade-offs. For example, unlike its $f(R)$ counterpart, $f(T)$ gravity is not locally Lorentz invariant, and thus it is much more difficult to impose an ansatz for a solution \emph{a priori} except in the simplest situations, such as for a flat spacetime. Furthermore, as we pointed out in this work, many problems that do not arise in $f(R)$ theory could arise in $f(T)$ gravity due to extra degrees of freedom and nonlinearity of the constraints.

We recall that in~\cite{IzumiOng}, it is pointed out that the extra degrees of freedom do not appear at the linear perturbation level on a FLRW background, and that this behavior is similar to that of nonlinear massive gravity~\cite{deRham:2010kj}
in which, while nonlinear analysis shows that there are generically five gravitational degrees of freedom~\cite{Hassan:2011hr},
in the second order action on open FLRW background there are only two tensor degrees of freedom~\cite{Gumrukcuoglu:2011ew, Shinji}.
Since the hidden degrees of freedom in fact cause nonlinear instability in the case of nonlinear massive gravity~\cite{DeFelice:2012mx}, it is a concern that $f(T)$ theory could exhibit a similar pathology. Here we would like to point out another similarity between nonlinear massive gravity and $f(T)$ gravity: they both probably exhibit superluminal propagation modes. The case for nonlinear massive gravity was studied in~\cite{Chien-I}, in which the authors investigated the effect of the helicity-0 mode of the theory and found that energy can probably be emitted superluminously on the self-accelerating background. In the case of the ghost-free Wess-Zumino massive gravity~\cite{WZ}, Deser and Waldron recently showed that it also exhibits superluminal propagation modes~\cite{DW}.

Here we should also make some comments on the possibility of \emph{negative energy solutions} in $f(T)$ gravity. In~\cite{Nester1989} it was shown, directly in terms of the teleparallel variables, that TEGR only permits positive energy. In general relativity, this is a well-known theorem proved initially by Schoen and Yau~\cite{Yau1, Yau2} and subsequently by Witten~\cite{Witten}. In the TEGR formulation the argument was that one can arrange for the Hamiltonian to be dominated by positive terms. A key step in the argument was using the local frame gauge
freedom to remove or at least control the non-positive terms in the TEGR Hamiltonian.  But in $f(T)$ gravity (and teleparallel theory in general) one does \emph{not} have local frame gauge freedom.  Therefore there is no way to kill or control some of the non-positive terms.

In the positive energy test developed in~\cite{Nester1987}, it is argued that while it is hard to prove that the energy for some theory is positive for every solution to the initial value
constraints,  it is not so hard to show that a negative energy solution exists. Indeed it is sufficient to show that the initial value constraints admit a non-trivial zero energy solution.  If so the theory should be discarded.  In principle this is a very strong test of a theory (i.e., it could exclude many theories which pass other tests).  Unfortunately in practice it was not so easy to find \emph{even one} bad solution explicitly, so it is generally hard work getting results beyond those of the linearized theory.  In any case, to us it seems likely that $f(T)$ gravity would be very vulnerable to having a nontrivial non-positive energy solution for most choices of $f$, although it is not expected to be very easy to show this.

Although our work is purely classical, we would also like to make a brief comment regarding strong coupling problem once we consider \emph{quantizing} the theory: Naively, the lack of extra degrees of freedom at the linear level is caused by accidental disappearance of certain kinetic terms at the linear level. 
Vanishing of kinetic terms at the linear level corresponds to the small limit of their coefficients, which in turn means the nonlinear coupling terms of the canonically normalized modes are becoming very strong.
Consequently the perturbative approach cannot be applied and the theory is out of control at the quantum level.
Then there is no reason to trust the classical effective action any more~\cite{ArkaniHamed:2002sp}.
This is expected to be a problem if one attempts to naively quantize $f(T)$ gravity. It is also worth mentioning that existence of ghosts in a classical theory of gravity may be acceptable if the ghost can be pushed to the Planck scale, to be dealt with by quantum theory of gravity \cite{Pisin}.

Finally, it is worth emphasizing that due to the fact that $f(T)$ theory is highly nonlinear, a full rigorous analysis is at best a very difficult task. One could have a viable theory if one manages to find a particular choice of $f$ that could avoid the problems raised in this work. In other words, these seemingly serious ``problems'' could in fact be a blessing in disguise, since they may provide a guide for narrowing down the viable forms of $f$. 

We conclude by the following remark: We agree that non-linear theories merit investigation, but, as far as we can see,  $f(T)$ gravity does not seem to have the right kind of non-linearity; its non-linearity may be too simple.
As commented in \cite{Nester8}, it is possible that  this could be an indication that an \emph{even more nonlinear theory} is required. Recall that the linearized spin-2 theory of gravity has problems that are only cured by nonlinearities of the full theory of general relativity (see, e.g., Route 5 in Box 17.2 in MTW \cite{MTW}, or Feynman's Lectures on Gravitation \cite{Fe}). Furthermore, in Fierz-Pauli massive gravity \cite{FP}, the Vaishtein mechanism \cite{Vainshtein} (introduced to avoid the vDVZ \cite{vDV, Z} discontinuity) excites the Boulware-Deser ghost \cite{BD} due to non-linearity. However, by introducing additional non-linearity, this ghost mode can be killed (this is the so-called "nonlinear massive gravity) \cite{dRGT1, dRGT2, dRGT3, dRGT4}.  Not surprisingly, this newly introduced non-linearity seems to give rise to other problems \cite{DW}. While inherently non-linear theories certainly merit serious consideration; finding one that is free of problems, in particular of the kind we have discussed, seems not so easy.

\acknowledgments
Yen Chin Ong would like to thank Brett McInnes and Wu-Hsing Huang for fruitful discussions about various aspects of differential geometry. Yen Chin Ong and James Nester would also like to thank Friedrich Hehl for providing comments on the early version of the draft. Keisuke Izumi is supported by Taiwan National Science Council (TNSC) under Project No.\ NSC101-2811-M-002-103. Pisin Chen is supported by TNSC under Project No.\ NSC 97-2112-M-002-026-MY3, by Taiwan's National Center for Theoretical Sciences (NCTS), and by the US Department of Energy under Contract No.\ DE-AC03-76SF00515. James M. Nester is also supported by TNSC under Project No.\ 100-2119-M-008-018 and 101-2112-M-008-006
and in part by NCTS. Yen Chin Ong is supported by the Taiwan Scholarship from Taiwan's Ministry of Education.

\appendix
\section{Appendix}
\label{Appendix}

Let us write $M=(M)_{ij}$ to denote a matrix $M$, while $M_{ij}$ denotes the matrix element at the $i^\text{th}$ row and $j^\text{th}$ column of $M$.
\begin{Lemma} Let $M=(M)_{ij}$ be an $n \times n$ matrix with non-vanishing determinant. Let $A=(A)_{ij}$ be a $2 \times 2$ matrix with vanishing determinant. Let $N=(N)_{ij}$ be the $(n+1) \times (n+1)$ matrix constructed from $M$ and $A$, such that
\begin{itemize}
\item[(1)] $N_{nn} = M_{nn} + a_{11}$,
\item[(2)] $N_{n(n+1)}=a_{12}$,
\item[(3)] $N_{(n+1)n} = a_{21}$,
\item[(4)] $N_{(n+1)(n+1)} = a_{22}$,
\item[(5)] $N_{(n+1)i}=0, ~ i \in \left\{1,2,\cdots,n\right\}$,
\item[(6)] $N_{i(n+1)}=0, ~ i \in \left\{1,2,\cdots,n\right\}$,
\item[(7)] $N_{ij}=M_{ij}, ~\text{otherwise}$.
\end{itemize}
Then, $\det{N}=a_{22}\det{M}$.
\end{Lemma}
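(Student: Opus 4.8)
The plan is to compute $\det N$ directly by a Laplace (cofactor) expansion along the last row, exploiting the fact that by construction this row is sparse: conditions (3), (4), (5) say that row $n+1$ of $N$ has only the two nonzero entries $N_{(n+1)n}=a_{21}$ and $N_{(n+1)(n+1)}=a_{22}$. Writing $N^{(i,j)}$ for the $n\times n$ minor obtained from $N$ by deleting row $i$ and column $j$, the expansion reads
\[
\det N = (-1)^{(n+1)+n}\,a_{21}\det N^{(n+1,n)} + (-1)^{(n+1)+(n+1)}\,a_{22}\det N^{(n+1,n+1)},
\]
so the whole problem reduces to identifying these two minors in terms of $M$, together with the single scalar $m:=\det M^{(n,n)}$, the $(n,n)$-minor of $M$.

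First I would handle $N^{(n+1,n+1)}$: by (7) and (1), deleting the last row and the last column of $N$ leaves precisely $M$ with its $(n,n)$ entry replaced by $M_{nn}+a_{11}$, i.e.\ $M+a_{11}E_{nn}$ with $E_{nn}$ the matrix unit. Multilinearity of the determinant in the $n$-th column gives $\det N^{(n+1,n+1)} = \det M + a_{11}m$, the coefficient of $a_{11}$ being the $(n,n)$-cofactor $(-1)^{n+n}\det M^{(n,n)}=m$. For $N^{(n+1,n)}$: deleting row $n+1$ and column $n$ leaves a matrix whose first $n-1$ columns are the first $n-1$ columns of $M$ restricted to rows $1,\dots,n$ (again by (7)), and whose last column is $(0,\dots,0,a_{12})^{T}$, since by (6) and (2) the only nonzero entry of column $n+1$ of $N$ in rows $1,\dots,n$ is $N_{n(n+1)}=a_{12}$. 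Expanding this minor along its last column picks out the single term $(-1)^{n+n}a_{12}\det M^{(n,n)} = a_{12}m$.

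Substituting both evaluations back, and using $(-1)^{2n+1}=-1$, $(-1)^{2n+2}=1$, yields
\[
\det N = -a_{21}a_{12}m + a_{22}\bigl(\det M + a_{11}m\bigr) = a_{22}\det M + (a_{11}a_{22}-a_{12}a_{21})\,m.
\]
Since $\det A = a_{11}a_{22}-a_{12}a_{21}=0$ by hypothesis, the second term drops and $\det N = a_{22}\det M$. There is no genuine obstacle here; the only point requiring care is the sign bookkeeping across the two nested cofactor expansions and noting that the modification of $M$ lives entirely in the $(n,n)$ slot, so that the auxiliary minors appearing are honestly minors of $M$ itself (and that $\det M\neq0$ is not actually needed for the identity, only for the intended application). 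An essentially equivalent alternative would be to expand $\det N$ multilinearly in rows $n$ and $n+1$ simultaneously, but the single Laplace expansion above is the cleanest route.
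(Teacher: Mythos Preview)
Your proof is correct and follows essentially the same route as the paper: Laplace expansion along the last row of $N$, identification of the two resulting $n\times n$ minors, a second expansion isolating the $(n,n)$-minor $m=\det M^{(n,n)}$, and cancellation via $\det A=0$. Your bookkeeping of signs and minors is in fact tidier than the paper's, and your remark that the hypothesis $\det M\neq 0$ is irrelevant to the identity (only to the application) is a correct and worthwhile observation.
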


\begin{proof}
The determinant of $M$ can be calculated straightforwardly by Laplace expansion:
\begin{flalign}\label{1}
\det{N} &= a_{22} \det\left[ \begin{pmatrix} 0 & \cdots & 0 \\ \vdots & \ddots & \vdots \\ 0 & \cdots & a_{11} \end{pmatrix} + (M)_{nn} \right] \nonumber \\ 
&- a_{21}\det[(M)_{n(n-1)}|r]_{n \times n},
\end{flalign}
where $r=(0 ~\cdots 0 ~a_{12})^T$.

The first term is
\begin{equation}
a_{22} \left[ (a_{11} + M_{nn})\det{(M)_{(n-1)\times (n-1)}}  + \sum ({\cdot})\right],
\end{equation}
where $\sum ({\cdot})$ is the remaining terms in the Laplace expansion.
The second term is
\begin{equation}
- a_{21}\det[(M)_{n(n-1)}|r]_{n \times n}=-a_{21}a_{12}\det{(M)_{(n-1)\times(n-1)}}.
\end{equation}
Vanishing of determinant of $A$ means that $a_{21}a_{12} = a_{11}a_{22}$. Therefore we see that
eq.~(\ref{1}) simply gives
\begin{flalign}
\det{N}&=a_{22}\left[M_{nn}\det[(M)_{(n-1)\times (n-1)}]  + \sum (\cdot)\right] \nonumber \\
&= a_{22}\det[M].
\end{flalign}
\end{proof}

It immediately follows that if $a_{22}\neq 0$, then $\det{N} \neq 0$ since $\det{M} \neq 0$.
This establishes the claim in section~\ref{characteristic matrix}, that the determinant for the $7 \times 7$ matrix is generically nonzero.

\end{document}